\newlength{\algowidth}
\def\uneligne{\centerline{\rule{0.9\textwidth}{0.5pt}}}
\def\algo#1#2{\paragraph{#1}\-\hrulefill\par
\vspace*{-0.5\baselineskip}
            \parbox{0.9\textwidth}{\sf\flushleft
                        #2}\par\vskip 2pt\uneligne\par\vskip 2pt}
\newcommand{\bi}{\begin{itemize}}
\newcommand{\ei}{\end{itemize}}
\newcommand{\bd}{\begin{description}}
\newcommand{\ed}{\end{description}}
\newcommand{\be}{\begin{enumerate}}
\newcommand{\ee}{\end{enumerate}}
\newcommand{\bc}{\begin{center}}
\newcommand{\ec}{\end{center}}
\def\tabulation{xxx\=xxx\=xxx\=xxx\=xxx\=xxx\=xxx\=xxx\=xxx\= \kill}
\def\bibfmta#1#2#3#4{{\textsc{#1}.}\ {#2}, \textit{#3}, #4.}
\def\bibfmtb#1#2#3#4{{\textsc{#1}.}\ \textit{#2}, {#3}, #4.}
\newtheorem{thm}{Theorem}[section]
\newtheorem{lemm}{Lemma}[section]
\newtheorem{defi}{Definition}[section]
\newenvironment{proof}{\begin{trivlist}
                       \item[]\hspace{0cm}\textbf{Proof: }
                       \hspace{0cm} }{\hfill $\square$
                       \end{trivlist}}
\newenvironment{rmk}{\begin{trivlist}
                       \item[]\hspace{0cm}\textbf{Remark}}{
                       \end{trivlist}}
\begin{document}

\title{A Uniform Self-Stabilizing \\
Minimum Diameter Spanning Tree Algorithm \\
{\small (Extended Abstract)}}

\author{Franck Butelle\thanks{Universit\'{e} Paris 10 
- 92000 Nanterre. France.},\  
Christian Lavault\thanks{LIPN, CNRS URA 1507, 
Universit\'{e} Paris-Nord\ 93430 Villetaneuse. France.},\ 
Marc Bui\thanks{Universit\'{e} Paris 10 - 92000 Nanterre. France.}
}
\date{\empty}
\maketitle

\begin{abstract}
We present a uniform self-stabilizing algorithm, which solves the
problem of distributively finding a minimum diameter spanning tree of
an arbitrary positively real-weighted graph. Our algorithm consists
in two stages of stabilizing protocols. The first stage is a uniform
randomized stabilizing {\em unique naming} protocol, and the second
stage is a stabilizing {\em MDST} protocol, designed as a {\em fair
composition} of Merlin--Segall's stabilizing protocol and a
distributed deterministic stabilizing protocol solving the (MDST)
problem. The resulting randomized distributed algorithm presented
herein is a composition of the two stages; it stabilizes in
$O(n\Delta+{\cal D}^2+ n \log\log n)$ expected time, and uses $O(n^2
\log n + n \log W)$ memory bits (where $n$ is the order of
the graph, $\Delta$ is the maximum degree of the network, $\cal D$
is the diameter in terms of hops, and $W$ is
the largest edge weight). To our knowledge, our protocol is the very
first distributed algorithm for the (MDST) problem. Moreover, it is
fault-tolerant and works for any anonymous arbitrary network.
\end{abstract}

\section{Introduction}\label{intro}
Many computer communication networks require nodes to broadcast
information to other nodes for network control purposes, which is done
efficiently by sending messages over a spanning tree of the network.
Now optimizing the worst-case message propagation delays over a
spanning tree is naturally achieved by reducing the diameter to a
minimum (see Sect.~\ref{mdstprob}); especially in high-speed
networks (where the message delay is essentially equal to the
propagation delay). However, when communication links fail or come up,
and when processors crash or recover, the spanning tree may have to
be rebuilt. When the network's topology changes, one option is to
perform anew the entire computation of a spanning tree with a minimum
diameter from scratch. We thus examine the question of designing an
efficient fault-tolerant algorithm, which constructs and dynamically
maintains a minimum diameter spanning tree of any anonymous
network. The type of fault-tolerance we require is so-called {\em
``self-stabilization''}, which means, informally, that an algorithm
must be able to ``recover'' from any arbitrary transient fault. In
this setting, we exhibit a self-stabilizing minimum diameter spanning
tree.  Our algorithm is asynchronous, it works for arbitrary anonymous
network topologies (unique processes ID's are not required), it is
uniform (i.e., every process executes the same code; processes are
identical), symmetry is broken by randomization, and it stabilizes in
efficient time complexity.

\subsection{Self-Stabilizing Protocols}\label{ssproto}
We consider distributed networks where processes and links from time
to time can crash and recover (i.e., dynamic networks), where
additionally, when processes recover, their memory may be recovered
within an arbitrary inconsistent state (to model arbitrary memory
corruption). Despite these faults, we wish the network to be able to
maintain and/or to be able to rebuilt certain information about itself
(e.g., in this particular case, maintaining a minimum diameter
spanning tree). When the intermediate period between one recovery and
the next failure is long enough, the system stabilizes.

The theoretical formulation of this model was put forth in the seminal
paper of Dijkstra \cite{Dijk74}, who, roughly, defined the network to
be ``self-stabilizing'' if starting from an {\em arbitrary} initial
state (i.e., after any sequence of faults), the network after some
bounded period of time (denoted as {\em stabilization time}) exhibits
a behaviour as if it was started from a good initial state (i.e,
stabilizes to a ``good'' behaviour, or ``legitimate state''). Notice
that such a formulation does not allow any faults during computation,
but allows an arbitrary initial state. Thus, if new faults occur
during computation, it is modelled in a self-stabilizing formulation
as if it were a {\em new initial state} from which the network again
must recover.
In summary, self-stabilization is a very strong fault-tolerance
property which covers many types of faults and provides a uniform approach
to the design of a variety of fault-tolerant algorithms.

\subsection{The Minimum Diameter Spanning Tree (MDST) Problem}\label{mdstprob}
The use of a control structure spanning the entire network is a
fundamental issue in distributed systems and interconnection networks.
Since {\em all} distributed total algorithms have a time complexity
$\Omega(D)$, where $D$ is the network diameter, a spanning tree of
minimum diameter makes it possible to design a wide variety of time
efficient distributed algorithms.

Let $G=(V(G),E(G))$ be a connected, undirected, positively real-weighted
graph. The {\bf (MDST) problem} is to find a spanning tree of~$G$ of
minimum diameter.

In the remainder of the paper, we denote the problem (MDST), {\em
MDST} denotes the protocol and MDST abbreviates the ``Minimum Diameter
Spanning Tree''.

\subsection{Related Works and Results}
The few literature related to the (MDST) problem mostly deals
either with graph problems in the Euclidian plane (Geometric Minimum
Diameter Spanning Tree), or with the Steiner spanning tree
construction (see~\cite{HLCW91,IhRW91}). The (MDST) problem is clearly
a generalization of the (GMDST) problem. Note that when edge weights are
real numbers (possibly negative), The (MDST) problem is NP-complete.

Surprisingly, although the importance of having a MDST is well-known,
only few papers have addressed the question of how to design
algorithms which construct such spanning trees.  While the problem of
finding and dynamically maintaining a minimum spanning tree has been
extensively studied in the literature (e.g., \cite{Awer87,GaHS83} and
\cite{AwCK90,EITT+92}), there exist no algorithms that
construct and maintain dynamically information about the diameter,
despite the great importance of this issue in the applications. (Very
recently, the distributed (MDST) problem was addressed in
\cite{BuBu93b,Lava95}). In this paper, we present an algorithm which
is robust to transient failures, and dynamically maintains a minimum
diameter spanning tree of any anonymous network: a much more efficient
(computationally cheaper) solution indeed than recomputing from
scratch over and over again.

As opposed to the (quasi-) absence of investigations dealing with the
(MDST) problem, and although self-stabilization is quite a new strand
of research in distributed computing, a large number of
self-stabilizing algorithms and theoretical related results were
proposed during the past few years (e.g.,
\cite{AfBr89,AnEH92,AKMP+93,APVD94,DoIM91a,%
DoIM91b,DoIM93,DoIM95b,KaPe93,ShRR95,Varg94}).  Due to their features,
self-stabilizing protocols were first used in the design of many
existing systems (e.g., DECNET protocols \cite {Perl83}).

Our distributed self-stabilizing algorithm is composed of a first
uniform stabilizing randomized stage protocol {\em UN} of {\em
``unique naming''} for arbitrary anonymous networks and of a second
stabilizing stage protocol {\em MDST}, which constructs a MDST. The
second stage performs a MDST protocol for {\em named} networks which
results after the first stage stabilizes.  This second stage is itself
constructed as the {\em fair composition}
\cite{DoIM93,DoIM95b,ShRR95} of Merlin--Segall's stabilizing
distributed routing protocol and a new deterministic
protocol for the (MDST) problem. The resulting algorithm $\cal A$ is
thus a composition of the two stages (see Sect.~\ref{correct})
to obtain a randomized, uniform, self-stabilizing MDST algorithm $\cal A$
for general anonymous graph systems.

The complexity of protocols is analyzed by the following
complexity measures. The {\bf Time Complexity} of a self-stabilizing
algorithm is mainly defined as the time required for stabilization (or {\em
``round complexity''}). More formally, the {\em stabilization time} of a
self-stabilizing deterministic (resp. randomized) algorithm is the
maximal (resp. maximal expected) number of rounds that takes the
system to reach a legitimate configuration, where the maximum is taken
over all possible executions (see the model $\cal M$ in Sect.~\ref{model}). 
The {\bf Space Complexity} of a self-stabilizing algorithm can be
expressed as the number of bits required to store the state of each
process; i.e., in the message passing model, the maximal size of local
memory used by a process. The {\bf Communication Complexity} is
measured in terms of the number of bits of the registers; i.e., in the
message passing model, the maximal number of bits exchanged by the
processes until an execution of the algorithm stabilizes. The time,
space and communication complexities of a composed algorithm are the
sum of the complexities of the combined protocols.

\subsubsection{Main contributions of the present paper}
\bi
\item A first stage consisting of a uniform stabilizing randomized
{\em UN}  protocol for any arbitrary network~$G$, which is an adapted
variant of the UN protocol designed in \cite{AnEH92}. In model $\cal M$,
our randomized {\em UN} protocol stabilizes in $O(n \log\log n)$
expected time, with a space complexity $O(n^2 \log n)$.
\item An original second stage stabilizing protocol {\em MDST}, which is
designed as the fair composition of Merlin--Segall's stabilizing
routing protocol and a new deterministic protocol for the (MDST)
problem. The second stage thus constructs a MDST of the named network
$G$. In the model $\cal M$, the protocol {\em MDST} stabilizes in
$O(n\Delta + {\cal D}^2)$ time, and its space complexity
is $O(n \log n + n \log W)$ bits (where $\Delta$ is the maximum degree
of~$G$, $\cal D$ is the diameter in terms of hops and $W$ is the
largest edge weight).
\item In model $\cal M$, the resulting randomized composed algorithm
$\cal A$ stabilizes in $O(n\Delta + {\cal D}^2+n\log\log n)$ expected
time and uses
$O(n^2 \log n + n \log W)$ memory bits. To our knowledge, it appears
to be the very first algorithm to {\em distributively} solve the (MDST)
problem. Moreover, our randomized distributed algorithm $\cal A$ is
fault-tolerant and works for any anonymous arbitrary network.
\ei

The remainder of the paper is organized as follows: in
Sect.~\ref{model}, we define the formal model $\cal M$ and
requirements for uniform, self-stabilizing protocols, and in
Sect.~\ref{algo} we present the stages of the composed uniform
self-stabilizing MDST algorithm $\cal A$. Section~\ref{correct} and
Sect.~\ref{anal} are devoted to the correctness proof, and to the
complexity analysis of stabilizing protocols ({\em UN}, {\em MDST},
and algorithm $\cal A$), respectively. The paper ends with concluding
remarks in Sect.~\ref{concl}.

\section{Model $\cal M$ (Message Passing)}\label{model}
Formal definitions regarding Input/Output Automata are omitted from this
abstract \cite{APVD94,Varg94}.

{\bf IO Automata, Stabilization, Time Complexity --}
An Input/Output Automaton (IOA) is a state machine with state transitions
which are given labels called {\em actions}. There are three kinds of actions.
The environment affects the automaton through {\em input actions} which
must be responded to in any state. The automaton affects the environment
through {\em output actions}; these actions are controlled by the automaton
to only occur in certain states. {\em Internal actions} only change the
state of the automaton without affecting the environment.

Formally, an IOA is defined by a {\em state} set $S$, an {\em action} set
$L$, a {\em signature Z} (which classifies $L$ into input, output, and
internal actions), a {\em transition relation} $T \subseteq S \times L \times
S$, and a non-empty set of {\em initial states} $I \subseteq S$. We mostly
deal with {\em uninitialized IOA}, for which $I=S$ ($S$ finite). An action
$a$ is said to be {\em enabled} in state $s$ if there exist $s' \in S$ such
that $(s,a,s') \in T$; input actions are always enabled.
When an IOA ``runs'', it produces an execution. An {\em execution fragment}
is an alternating sequence of states and actions $(s_0,a_1,s_1 \ldots)$,
such that $(s_i,a_i,s_{i+1}) \in T$ for all $i \geq 0$. An execution fragment
is {\em fair} if any internal or output action which is continuously
enabled eventually occurs. An {\em execution} is an execution fragment
which starts with an initial state and is fair. A {\em schedule} is a
subsequence of an execution consisting only of the actions. A {\em
behaviour} is a subsequence of a schedule consisting only of its input
and output actions. Each IOA generates a set of behaviours. Finally, let
$A$ and $B$ denote two IOA, we say that $A$ {\em stabilizes} to $B$ if
every behaviour of $A$ has a suffix which is also a behaviour of $B$.

For time complexity, we assume that every internal or output action which
is continuously enabled occurs in one unit of time. We say that $A$
stabilizes to $B$ in time $t$ if $A$ stabilizes to $B$ and every behaviour
of $A$ has a suffix which occurs within time $t$. The {\em stabilization
time} from $A$ to $B$ is the smallest $t$ such that $A$ stabilizes to $B$
in time $t$.

{\bf Network Model --} The model $\cal M$ is for message passing
protocols. The system is a standard point-to-point asynchronous
distributed network consisting of $n$ communicating processes
connected by $m$ bidirectional links. As usual, the network topology
is described by a connected undirected graph $G = (V,E)$, devoid of
multiple edges and loop-free. $G$ is defined on a set $V$ of vertices
representing the processes and $E$ is a set of edges representing the
bidirectional communication links operating between neighbouring
vertices: in the sequel, $|V| = n$, and $|E| = m$.  We view
communication interconnection networks as undirected graphs.
Henceforth, we use the terms {\em graph} (resp. {\em nodes}/{\em
edges}) and {\em network} (resp. {\em processes}/{\em links})
interchangeably.

Each node and link is modelled by an IOA \cite{APVD94,Varg94}.  A
protocol is {\em uniform} if all processes perform the same protocol
and are indistinguishable; i.e., in our model, we do not assume that
processes have unique identities (ID's). We drop the adjective
``uniform'' from now on. The model $\cal M$ assumes that the messages
are transferred on links in FIFO order, and in a finite but unbounded
delay.  It is also assumed that any non-empty set of processes may
start the algorithm (such starting processes are ``initiators''),
while each non-initiator remains quiescent until reached by some
message.  In model $\cal M$, processes have no global knowledge about
the system (no structural information is assumed), but only know their
neighbours in the network (through the mere knowledge of their ports).
In particular, the model $\cal M$ assumes that nothing is known about
the network size $n$ or the diameter $D(G)$ (no upper bound on $n$ or
on $D(G)$ is either known). Regarding the use of memory, $\cal M$ is
such that the amount of memory used by the protocols remains bounded,
i.e., only a bounded number of messages are stored on each link at any
instant. The justification for this assumption is twofold: first, not
much can be done with unbounded links in a stabilizing setting
\cite{APVD94,DoIM91a,Varg94}, and secondly, real channels are
inherently bounded anyway. In other words, we model bounded links as
unit capacity data links which can store at any given instant at most
one circulating message. A link $uv$ from node $u$ to node $v$ is
modelled as a queue $Q_{uv}$, which can store at most one message from
some message alphabet $\Sigma$ at any instant time. The external
interface to the link $uv$ includes an input action {\sc
Send}$_{uv}(m)$ (``send message $m$ from $u$''), an output action {\sc
Receive}$_{uv}(m)$ (``deliver message $m$ at $v$''), and an output
action {\sc Free}$_{uv}$ (``the link $uv$ is currently free'').  If a
{\sc Send}$_{uv}(m)$ occurs when $Q_{uv}=\emptyset$, the effect is
that $Q_{uv}=\{m\}$; when $Q_{uv}=\emptyset$, {\sc Free}$_{uv}$ is
enabled.  If a {\sc Send}$_{uv}(m)$ occurs when $Q_{uv} \neq
\emptyset$, there is no change of state. Note that by the above timing
assumptions, a message stored in a link will be delivered in one unit
of time.

We refer to \cite{APVD94} for detailed and formal definitions of the
notions of {\em queued node automaton}, {\em network automaton for a
graph G}, and similarly for the notions of {\em internal reset} and
{\em stabilization by local checking and global reset}. (See
Sect.~\ref{correct} for the definition of local checkability and the
statement of the two main theorems used in the correctness proof of the
algorithm).

\section{The Algorithm}\label{algo}
Let $G=(V(G),E(G))$ be a connected, undirected, positively
real-weighted graph, where the weight of an edge $e = uv \in E(G)$ is
given by $\omega_{uv}$. In the remainder of the paper, we use the
graph theoretical terminology and notation. The weight of a path
$[u_{0}, \ldots , u_{k}]$ of~$G$ ($u_{i} \in V(G)$) is defined as
$\sum_{i=0}^{k-1}\omega_{u_{i}u_{i+1}}$. For all nodes $u$ and $v$,
the {\em distance} from $u$ to $v$, denoted $d_G (u,v)$, is the lowest
weight of any path length from $u$ to $v$ in $G$ ($\infty$ if no such
path exists).  The distance $d_G (u,v)$ represents the {\em shortest
path} from $u$ to $v$, and the largest (maximal) distance from node
$v$ to all other nodes in $V(G)$, denoted $s_G (v)$, is the {\em
separation} of node $v$: viz. $s_G (v) =\max_{u\in V(G)} d_G
(u,v)$~\cite{Chri75}.  $D(G)$ denotes the diameter of $G$, defined as
$D(G) = \max_{v\in V(G)} s_G(v)$, and ${\cal D}(G)$ the diameter in
terms of hops. $R(G)$ denotes the radius of $G$, defined as
$R(G)=\min_{v\in V} s_G (v)$. $\Psi_G(u)$ represents a shortest-paths
tree (SPT) rooted at node $u$: $(\forall v\in V(G))\;\;
d_{\Psi_G(u)}(u,v) = d_G(u,v)$.  The set of all SPT's of $G$ is then
denoted $\Psi(G)$. The name of the graph will be omitted when it is clear from the context.

\subsection{A High-Level Description}

\subsubsection{Unique Naming Protocol} \label{un}
The {\em unique naming} protocol solves the (UN) problem, where each
process $u$ must select one ID distinct from all other processes'.
The protocol executes propagation of information (propagation of the
ID of process $u$) and feedback ($u$ collects the ID's of all other
processes): i.e., a ``PIF'' protocol.  Our randomized stabilizing
protocol {\em UN} is a variant of the memory adaptive UN PIF protocol
presented in \cite{AnEH92} and slightly differs in the following
respects.  First, our results hold for the message passing model $\cal
M$, even though they can easily be transposed in the link register
model (and {\em vice versa}: the results in \cite{AnEH92} can easily
be extended to the message passing model).  Next, we do not use the
ranking phase designed in the original protocol, but a simple ID's
conflict checking phase. Besides, our maximum estimate for the size of
the network is arbitrarily chosen to be $\leq\lg n$ (see the proof of
Theorem~\ref{thm:unc} in~\cite{BuBL95}), instead of $n^{1/2} -
n^{1/3}$ in~\cite{AnEH92}.  Note that the model $\cal M$ assumes that
nothing is known about $n$ or $D(G)$ (not even an upper bound),
therefore, the UN Monte-Carlo protocol in
\cite{AnEH92} {\em cannot} be turned into a randomized Las Vegas
protocol (e.g., a protocol solving the (UN) problem with probability~1).

Due to the lack of space, we do not give a detailed description of our
protocol {\em UN} herein. A full description of the three phases executed
in the protocol
can be found in~\cite{AnEH92} (for the original version) and in~\cite{BuBL95}
(for our own variant). However, for better understanding of self-stabilization
(showed in Sect.~\ref{correct}), let us just point out the behaviour of
protocol {\em UN}
in phase~3. Each process in phase~3 repeatedly broadcasts a message
with its ID. At the end of each broadcast, if $u$ detects a conflict, it
initiates a Reset. In addition, $u$ collects the ID's of all other processes
(provided by feedback) and checks that all processes have unique ID's.
The variable $IDList$ contains the list of ID's of the visited processes.
At the beginning of each broadcast, it is set to the initiator's ID;
each visited process attaches its own ID to the list before forwarding it
to its neighbours.
After stabilization, every process remains forever in phase~3.

\subsubsection{Construction of a MDST} \label{mdst}
The definition of separation must be generalized to {\em ``dummy nodes''}
(so-called in contrast to actual vertices of $V$). Such a fictitious
node may possibly be inserted on any edge $e\in E$. Thus, let $e=uv$
be an edge of weight $\omega_{uv}$, a dummy node $\gamma$ inserted on
$e$ is defined by specifying the weight $\alpha$ of the segment
$u\gamma$.  According to the definition, the separation $s(\gamma)$ of
a {\em general node} $\gamma$, whether it is an actual vertex in $V$ or a
dummy node, is clearly given by: $s(\gamma)=\max_{z\in V}d(\gamma,z)$.
A node $\gamma^*$ such that $s(\gamma^*)=\min_{\gamma}s(\gamma)$ is
called an {\it absolute center} of the graph. Recall that $\gamma^*$
always exists in a connected graph, and that is not unique in general.

\begin{figure}[hbt]
\centering
\includegraphics[height=5cm]{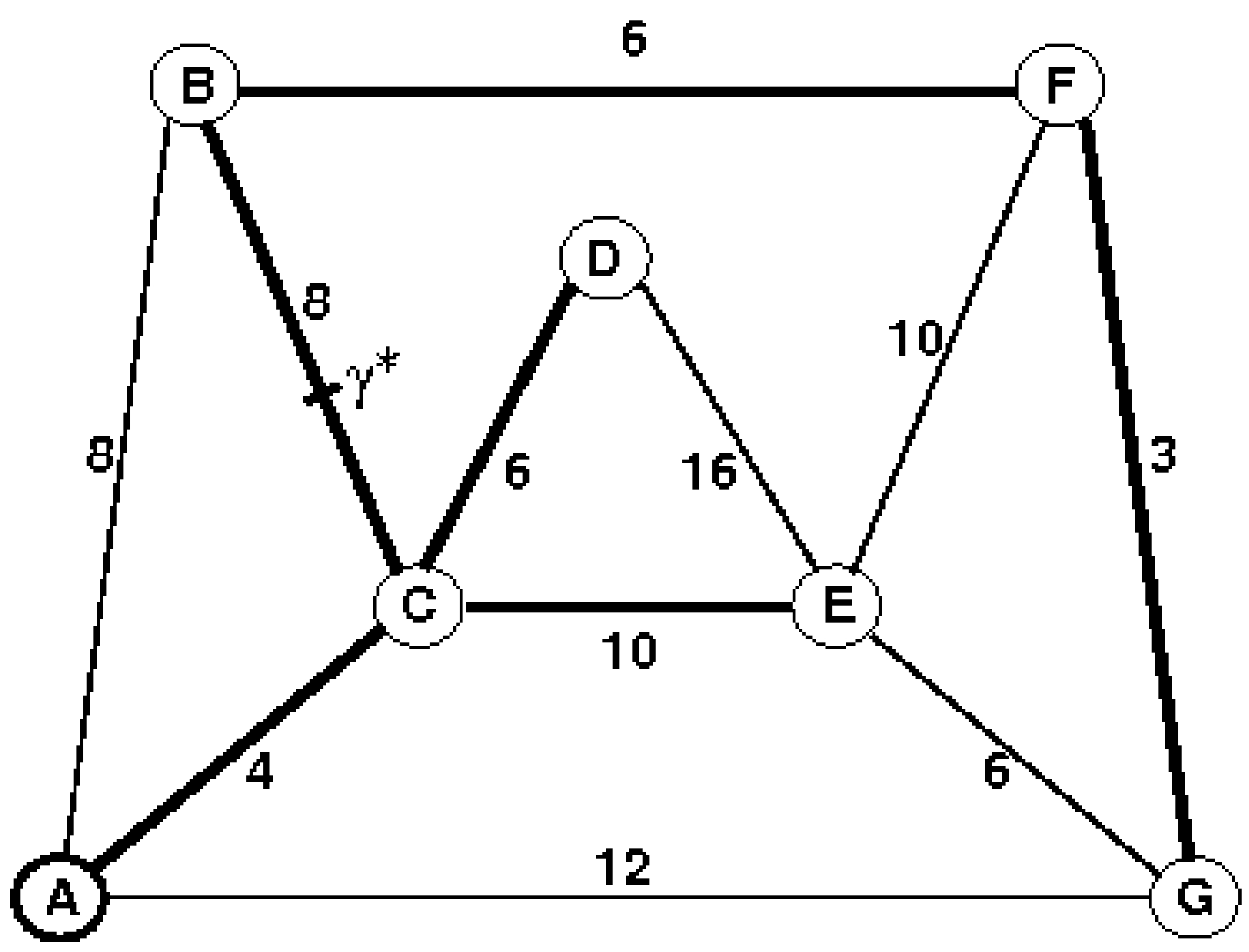}
\caption{Example of a MDST $T^*$ ($D(G)=22$ and $D(T^*)=27$)}
\label{fig:mdst}
\end{figure}

Similarly, the definition of $\Psi(u)$ is also generalized so as to
take these dummy nodes into account. Finding a MDST actually amounts
to search for an absolute center $\gamma^*$ of $G$, and the SPT rooted
at $\gamma^*$ is then a MDST of $G$. Such is the purpose of the
following Lemma:

\begin{lemm}\label{lem:abscenter} {\rm \cite{CaGM80}}
The (MDST) problem for a given graph $G$ is (polynomially) reducible to
the problem of finding an absolute center of $G$.
\end{lemm}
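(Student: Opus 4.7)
The plan is to reduce (MDST) to finding an absolute center by establishing two matching bounds on the diameter of any spanning tree in terms of the absolute radius $\rho=s(\gamma^*)$, and then observing that an SPT rooted at $\gamma^*$ realizes both bounds.

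First I would prove the upper bound: given an absolute center $\gamma^*$ of $G$, one can build in polynomial time a spanning tree of diameter $\le 2\rho$. When $\gamma^*\in V(G)$, I just take $T=\Psi_G(\gamma^*)$; then for every pair $u,v\in V(G)$,
$$d_T(u,v)\le d_T(u,\gamma^*)+d_T(\gamma^*,v)=d_G(u,\gamma^*)+d_G(\gamma^*,v)\le 2\rho.$$
When $\gamma^*$ lies in the interior of an edge $e=xy$, I partition $V(G)$ according to which endpoint of $e$ each vertex reaches $\gamma^*$ through, take an SPT of each piece rooted at the corresponding endpoint, and glue them together with the edge $e$; the same triangle argument (routed through $\gamma^*$) gives diameter $\le 2\rho$.

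Next I would establish the matching lower bound: every spanning tree $T$ of $G$ satisfies $D(T)\ge 2\rho$. Let $P=[u_0,\ldots,u_k]$ be a diametral path of $T$ and let $\gamma$ be its midpoint, so $\gamma$ is a general node of $G$ with $d_T(u_0,\gamma)=d_T(u_k,\gamma)=D(T)/2$. For any vertex $z$, the $T$-path from $z$ to $\gamma$ first meets $P$ at some point $\gamma'$; combining $d_T(u_0,z),\,d_T(u_k,z)\le D(T)$ yields $d_T(\gamma',z)\le\min(d_T(u_0,\gamma'),d_T(u_k,\gamma'))$, from which $d_T(\gamma,z)\le D(T)/2$ follows. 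Hence $s_T(\gamma)\le D(T)/2$, and since $T$ is a spanning subgraph of $G$, $\rho\le s_G(\gamma)\le s_T(\gamma)\le D(T)/2$.

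Combining the two bounds, the (generalized) SPT rooted at any absolute center $\gamma^*$ has diameter exactly $2\rho$ and is therefore a MDST, which yields the claimed polynomial-time reduction: compute $\gamma^*$, then compute $\Psi_G(\gamma^*)$. I expect the main obstacle to be the dummy-node case of the upper bound: one must check carefully that the split construction really defines a spanning tree of $G$ itself (not of the subdivided graph), and that the triangle inequality still delivers the bound $2\rho$ when the path between $u$ and $v$ is forced to traverse the interior of the edge $e$ carrying $\gamma^*$.
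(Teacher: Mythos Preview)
Your argument is correct and is essentially the classical proof of this reduction. Note, however, that the paper does not supply its own proof of Lemma~\ref{lem:abscenter}: it simply cites the result from \cite{CaGM80} and moves directly to Lemma~\ref{lem:hackimi}, which explains how to \emph{compute} an absolute center. The only related content is the sentence preceding the lemma, which asserts without justification that ``the SPT rooted at $\gamma^*$ is then a MDST of $G$'' once $\Psi(u)$ is generalized to dummy nodes. So there is nothing substantive to compare against; your two-sided bound ($D(T)\ge 2\rho$ for every spanning tree, and $D(\Psi_G(\gamma^*))\le 2\rho$) is precisely what is needed to justify that sentence, and your flagging of the dummy-node case as the delicate step is accurate. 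The cleanest way to handle that case is to subdivide the edge $e$ at $\gamma^*$, take the SPT of the subdivided graph rooted at $\gamma^*$, and then suppress $\gamma^*$ (merging its two incident tree edges back into $e$, or simply deleting it if it has only one child) to recover a genuine spanning tree of $G$ with the same pairwise distances.
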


\subsubsection{Computation of an absolute center of a graph} \label{center}\hfill

\noindent
According to the results in~\cite{Chri75}, we use the following Lemma
to find an absolute center of~$G$.

\begin{lemm}\label{lem:hackimi}
Let $G=(V,E)$ be a weighted graph. An absolute center $\gamma^*$
of $G$ is constructed as follows:

\vspace{-0.9\baselineskip}
\be
\item[(i)] On each edge $e \in E$, find a general node $\gamma_e$
of minimum separation.
\item[(ii)] Among all the above $\gamma_e$'s, $\gamma^*$ is a node
achieving the smallest separation.
\ee
\end{lemm}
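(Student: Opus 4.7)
The plan is to show two things: first, that step~(i) is well-posed and computable, i.e.\ that on each edge there is indeed a general node minimizing the separation; second, that minimizing over the edge-wise minima yields a global minimum over all general nodes. The core observation is that the set of general nodes is precisely the disjoint union of the (closed) edges of $G$, so optimizing over all general nodes factors as optimizing edge by edge.

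For step~(i), I would fix an edge $e=uv$ of weight $\omega_{uv}$ and parametrize a general node $\gamma$ on $e$ by $\alpha \in [0,\omega_{uv}]$, the weight of the segment $u\gamma$. Then for every actual vertex $z \in V$,
\[
 d(\gamma,z) \;=\; \min\!\bigl(\alpha + d(u,z),\; (\omega_{uv}-\alpha) + d(v,z)\bigr),
\]
since any shortest path from $\gamma$ to $z$ must leave the edge $e$ either through $u$ or through $v$. As a function of $\alpha$, each of the two terms is affine, so $d(\cdot,z)$ is concave and piecewise linear. Hence
\[
 s(\gamma) \;=\; \max_{z\in V} d(\gamma,z)
\]
is the upper envelope of at most $n$ such concave tent-shaped functions, itself piecewise linear on $[0,\omega_{uv}]$ with a finite number of breakpoints. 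Its minimum $s(\gamma_e) = \min_{\alpha\in[0,\omega_{uv}]} s(\gamma)$ is attained (either at an endpoint or at an internal breakpoint) and therefore defines $\gamma_e$.

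For step~(ii), note that by definition $\gamma^*$ ranges over all general nodes of $G$. Since every general node lies on some edge $e \in E$,
\[
 s(\gamma^*) \;=\; \min_{\gamma\text{ general}} s(\gamma) \;=\; \min_{e \in E}\; \min_{\gamma\in e} s(\gamma) \;=\; \min_{e\in E} s(\gamma_e),
\]
and any edge $e^*$ realizing the outer minimum yields an absolute center $\gamma^* = \gamma_{e^*}$. This finishes the proof modulo the analysis of step~(i).

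The main difficulty I expect is the justification of step~(i): one must verify the distance formula above (which uses that $e$ has no shorter parallel or internal structure since $G$ has no multiple edges or loops), and then argue that the upper envelope of the $n$ concave piecewise-linear functions does attain a minimum that can be exhibited explicitly. All other parts are essentially a partition-of-the-search-space argument.
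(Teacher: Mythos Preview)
Your proposal is correct and follows essentially the same approach as the paper: the paper also parametrizes a general node on $e=uv$ by $\alpha$, derives the identical formula $s(\gamma)=\max_{z\in V}\min\{\alpha+d(u,z),\omega_{uv}-\alpha+d(v,z)\}$, interprets each $d(\gamma,z)$ as the minimum of two affine segments (their ``convex cone'' is your ``tent''), takes the upper boundary/envelope to obtain a piecewise-linear function whose global minimum gives $\gamma_e$, and then uses the same partition-of-the-search-space identity $\min_\gamma s(\gamma)=\min_e s(\gamma_e)$ for step~(ii). The only difference is presentational: the paper phrases step~(i) more geometrically (plotting segments of slope $\pm 1$ in Cartesian coordinates) with an eye toward the subsequent algorithm, whereas you phrase it in terms of concavity and envelopes.
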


\begin{proof}(the proof is constructive)

{\em (i)} This first step is performed as follows: for each edge $e=uv$, let
$\alpha = d(u,\gamma)$. Since the distance $d(\gamma,z)$ is the length
of either a path $[\gamma,u,\ldots ,z]$, or a path $[\gamma,v,\ldots ,z]$,
\begin{equation}
s(\gamma)=\max_{z\in V} d(\gamma,z)=\max_{z\in V}\,\min\{\alpha+d(u,z),
\omega_{uv}-\alpha+d(v,z)\}. \label{eq:sep}
\end{equation}

If we plot $f_{z}^+(\alpha)=\alpha+d(u,z)$ and $f_{z}^-(\alpha)=-\alpha +
\omega_{uv}+d(v,z)$ in Cartesian coordinates for fixed $z=z_0$, the
real-valued functions $f_{z_0}^+(\alpha)$ and $f_{z_0}^-(\alpha)$
(separately depending on $\alpha$ in the range $[0,\omega_e]$) are
represented by two line segments $(S_1)_{z_0}$ and $(S_{-1})_{z_0}$,
with slope $+1$ and $-1$, respectively. For a given $z=z_0$, the
smallest of the two terms $f_{z_0}^+(\alpha)$ and $f_{z_0}^-(\alpha)$
(in~(\ref{eq:sep})) is thus found by taking the {\em convex cone} of
$(S_1)_{z_0}$ and $(S_{-1})_{z_0}$. By repeating the above process for
each node $z \in V$, all convex cones of segments $(S_1)_{z \in V}$
and $(S_{-1})_{z \in V}$ are clearly obtained (see
Fig.~\ref{fig:bound}).

Now we can draw the {\em upper boundary} $B_e (\alpha)$ ($\alpha\in
[0,\omega_e]$) of all the above convex cones of segments $(S_1)_{z\in
V}$ and $(S_{-1})_{z \in V}$. $B_e (\alpha)$ is thus a curve made up
of piecewise linear segments, which passes through several local
minima (see Fig.~\ref{fig:bound}). The point $\gamma$ achieving the
smallest minimum value (i.e., the global minimum) of $B_e (\alpha)$
represents the absolute center $\gamma^*_e$ of the edge~$e$.

\medskip
{\em (ii)} By definition of the $\gamma^*_e$'s, $\min_\gamma s(\gamma) =
\min_{\gamma^*_e}s(\gamma^*_e)$, and $\gamma^*$ achieves the
smallest separation. Therefore, an absolute center of the graph is found
at any point where the minimum of all $s(\gamma^*_e)$'s is attained.
\end{proof}

\begin{figure}[htb]
\centering
\includegraphics[height=5cm]{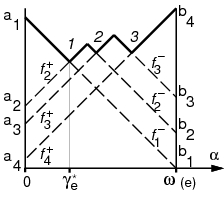}
\caption{Example of an upper boundary $B_e (\alpha)$}
\label{fig:bound}
\end{figure}

By Lemma~\ref{lem:hackimi}, we may consider this method from an
algorithmic viewpoint. For each $e=uv$, let ${\cal C}_e$ be the set of
pairs $\{(d_1,d_2) \; / \; (\forall z \in V) \; d_1=d(u,z),
d_2=d(v,z)\}$ Now, a pair $(d_1',d_2')$ is said to {\em dominate} a
pair $(d_1,d_2)$ iff $d_1\leq d_1'$, and $d_2\leq d_2'$ (viz. the
convex cone of $(d_1',d_2')$ is over the convex cone of $(d_1,d_2)$).
Any such pair $(d_1,d_2)$ will be ignored when it is dominated by
another pair $(d_1',d_2')$.

Notice that the local minima of the upper boundary $B_e (\alpha)$
(numbered from~1 to 3 in Fig.~\ref{fig:bound}) are located at the
intersection of segments $f_{i}^{-}(\alpha)$ and
$f_{i+1}^{+}(\alpha)$, when all dominated pairs are removed. If we
sort the set ${\cal C}_e$ in descending order with respect to the
first term of each remaining pair $(d_1,d_2)$, we thus obtain the list
$L_e=((a_1,b_1), \ldots ,(a_{|L_e|},b_{|L_e|})$ consisting in all such
remaining ordered pairs. Hence, the smallest minimum of $B_e (\alpha)$
for a given edge $e$ clearly provides an absolute center $\gamma^*_e$.
(See Procedure {\tt Gamma\_star($e$)} in Sect.~\ref{formal}). By
Lemma~\ref{lem:hackimi}, once all the $\gamma^*_e$'s are computed,
an absolute center $\gamma^*$ of the graph is obtained. By
Lemma~\ref{lem:abscenter}, finding a MDST of the graph reduces to
the problem of computing~$\gamma^*$.

\subsubsection{All-Pairs Shortest-Paths Protocol (APSP)}\label{apsp}\hfill

\noindent
In the previous paragraph, we consider distances
$d(u,z)$ and $d(v,z)$, for all $z\in V$ and each edge $e=uv$. Such
distances must be computed by a failsafe distributed routing protocol,
e.g., Merlin--Segall's APSP protocol designed in~\cite{MeSe79}.

The justification for this choice is threefold. First, shortest paths
to each destination $v$ are computed by executing the protocol
independently for each $v$. Thus, an essential property of
Merlin--Segall's algorithm is that the routing tables are cycle-free
at any time (Property (a) in \cite{MeSe79}).  Next, the protocol is
also adapted to any change in the topology and the weight of edges
(Property (b)).  Finally, the protocol converges in dynamic networks
and is indeed self-stabilizing (Property (c)). (See
Lemma~\ref{lem:l}).

\subsection{A Formal Description}\label{formal}
Assume the list $L_e$ defined above (in Paragraph~\ref{center}) to be
already constructed (for example with a heap,
whenever the routing tables are computed), the following procedure computes
the value of $\gamma^*_e$ for any fixed edge~$e$.

\algo{Procedure {\tt Gamma\_star($e$)}}{
\begin{tabbing}\tabulation\>
{\bf var} $min, \alpha$ : real \hspace{0.5cm} {\bf Init} $min \leftarrow +
\infty$ ;
$\alpha \leftarrow 0$ ;\+\\
{\bf For} i=1 to $|L_e|$ {\bf do}\+\\
{\sl compute the intersection $(x,y)$ of segments $f_{i}^{-}$ and
$f_{i+1}^{+}$~:}\\
$x=\frac{1}{2}(\omega_e -a_i +b_{i+1})$ ;
$y=\frac{1}{2}(\omega_e +b_{i+1} +a_i)$\\
{\bf if} $y< min$ {\bf then} $min \leftarrow y$ ; $\alpha \leftarrow x$ ;\-\\
{\bf Return}($\alpha$,$min$)
\end{tabbing}
}

The distributed protocol {\em MDST} finds a MDST of an input graph
$G=(V,E)$ by computing the diameter of the SPT's for all nodes.
Initially, an edge weight $\omega_{uv}$ is only known by its two
endpoints $u$ and $v$.  In the first stage, the randomized,
stabilizing protocol {\em UN} provides each process $u$ with its
unique ID, denoted $ID_u$ (see Sect.~\ref{un}).

\algo{Protocol $MDST$ (for process $u$)}{
\begin{tabbing}xxx\=\kill\>
{\bf Type} \=elt : {\bf record} $alpha\_best$,
$upbound$ : real ; $ID_1$, $ID_2$: integer {\bf end} ;\+\\
{\bf Var} \>$\Lambda$ : set of elt ; $\varphi$, $\varphi_u^*$ : elt ;
$D$, $R$, $\alpha$, $localmin$ : real ;\+\\
$d_u$ : array of weights ;\hspace{2cm}{\em (* $d_{u}[v]$ estimates $d(u,v)$ *)}
\end{tabbing}
\vspace{-2mm}
\be
\item {\bf For all} $v\in V$\\
Compute $d_u[v]$, $D$ and $R$ ; \hfill{\em (* by Merlin--Segall's protocol *)}
\item $\varphi.upbound \leftarrow R$ ;
\item {\bf While} $\varphi.upbound>D/2$ {\bf do for} any
edge $uv$ s.t. $ID_v > ID_u$
    \be
    \item $(\alpha,localmin) \leftarrow$ {\tt Gamma\_star($uv$)} ;
    \item {\bf If} $localmin<\varphi.upbound$ {\bf then}
    $\varphi \leftarrow (\alpha,localmin,ID_u,ID_v)$ ;
    \ee
\item $\Lambda \leftarrow \{\varphi\}$ ;
\item {\bf Receive} $\langle \varphi \rangle$ from all sons of $u$
in $\Psi(r)$\\
($r$ is s.t. $ID_r=\min_{v\in V}\{ID_v\}$) ;
$\Lambda\leftarrow \Lambda\cup \{\varphi\}$ ;
\item Minimum finding:
    \be
    \item Compute $\varphi_u^*$ s.t.
$\displaystyle\varphi_u^*.upbound=\min_{\varphi\in\Lambda}\varphi.upbbound$
;\\
    {\bf Send} $\langle \varphi_u^* \rangle$ to father in $\Psi(r)$ ;
    \item {\bf If} $ID_u = ID_r$ {\bf then}
    upon reception of $\langle \varphi\rangle$ from all sons of $r$,
    $r$ forwards $\langle \varphi_u^* \rangle$ to all other nodes.
    \ee
\ee
\vspace{-2mm}
}

\begin{rmk}
In order to complete self-stabilization, the deterministic protocol
{\em MDST} must be repeatedly executed .

A {\em sequential} algorithm for the (MDST) problem may also be
derived from the above protocol, since $\Psi(\gamma)$ is then a MDST of $G$,
where $\gamma$ is the general node s.t. $s(\gamma)=upbound$.
\end{rmk}

\noindent
{\bf Improvements:}\, In practice, some improvements in protocol {\em MDST}
can easily be carried out. Indeed, reducing the enumeration of dummy nodes
may be done by discarding several edges of $G$ from the exploration.
To be able to discard an edge, we only need to know bounds on the minimum
diameter $D^*$ of all spanning trees of $G$. Note that the lower bound on
$D^*$ is obviously $D(G)$, and that $D^*$ is also bounded from above by the
minimum diameter taken over all SPT's, viz. $D^* \leq
\min_{T \in \Psi(G)}D(T)$. In the example of Fig.~\ref{fig:mdst}, such
improvements lead to discard from the exploration the edges
EF, AB, AC, BF, CD, DE, EG, FG. (See~\cite{BuBL95}).

\section{Correctness}
\subsection{Self-Stabilization}\label{ss}
Fix a network automaton $\cal N$ for a given graph $G$, the definition of
local checkability is stated as follows~\cite{APVD94}.

\begin{defi}\label{def:local}
Let ${\cal L} = \{LP_{uv}\}$ be a set of local predicates, and let $\psi$
be any predicate of $\cal N$. A network automaton $\cal N$ is locally
checkable for $\psi$ using $\cal L$ if the following conditions hold.

(i) For all states $s \in S({\cal N})$, if $s$ satisfies $LP_{uv}$ for all
$LP_{uv} \in {\cal L}$, then $s \in \psi$.

(ii) There exists $s \in S({\cal N})$ such that $s$ satisfies $LP_{uv}$
for all $LP_{uv} \in {\cal L}$.

(iii) Each $LP_{uv} \in {\cal L}$ is stable: for all transitions
$(s,a,s')$ of $\cal N$, if $s$ satisfies $LP_{uv}$ then so does $s'$.
\end{defi}

The main theorem in \cite{APVD94} is about self-stabilization by local
checking and global reset. Roughly, it shows that any protocol which
is locally checkable for some global property can be transformed into
an equivalent protocol, which stabilizes to a variant of the protocol
in which the global property holds in its initial state. This
transformation increases the time complexity by an overhead given
in~\cite[Theorem 10]{APVD94}.

Also recall the fundamental Theorem~\ref{thm:comp} which states the
fair composition of two stabilizing protocols $P_1$ and $P_2$ \cite{DoIM93}.
\begin{thm}\label{thm:comp}
If the four conditions hold,

(i) protocol $P_1$ stabilizes to $\psi_1$;

(ii) protocol $P_2$ stabilizes to $\psi_2$ if $\psi_1$ holds;

(iii) protocol $P_1$ does not change variables used by $P_2$ once
$\psi_1$ holds; and,

(iv) all executions are fair w.r.t. both $P_1$ and $P_2$,

then the fair composition of $P_1$ and $P_2$ stabilizes to $\psi_2$.
\end{thm}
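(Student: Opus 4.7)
The plan is to take an arbitrary fair execution $E$ of the composed automaton and partition its timeline into two phases along $P_1$'s stabilization point. First I would project $E$ onto $P_1$'s actions and state components, obtaining a sequence $E_1$. Hypothesis (iv), together with the standard projection property for I/O automata, guarantees that $E_1$ is a fair execution of $P_1$ alone, so by (i) there is a time $t_1$ after which every state of $E_1$---and hence of $E$, since $\psi_1$ is a predicate over variables belonging to $P_1$---satisfies $\psi_1$.

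Next I would argue that $\psi_1$ is in fact a permanent invariant of $E$ past $t_1$. After $t_1$, $P_1$'s own steps preserve $\psi_1$ by the stabilization assumption (i), and $P_2$'s steps cannot falsify $\psi_1$ because under the semantics of fair composition $P_2$ does not write to the variables on which $\psi_1$ depends; hence $\psi_1$ holds on the entire suffix of $E$ after $t_1$. Now I project this suffix onto $P_2$'s actions and variables to obtain a fragment $E_2$. Condition (iii) is exactly what is required here: since $P_1$ no longer modifies any variable shared with $P_2$, the values $P_2$ observes from $P_1$'s state remain as they would in a standalone execution of $P_2$ whose initial environment already satisfies $\psi_1$. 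Fairness of $E_2$ follows again from (iv), so hypothesis (ii) applies and yields a time $t_2 \geq t_1$ after which $\psi_2$ holds on $E_2$ and therefore on $E$. Since $E$ was an arbitrary fair execution, every behaviour of the composition has a suffix that is a behaviour of the stabilized system, i.e.\ the fair composition stabilizes to $\psi_2$.

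The principal obstacle is the careful bookkeeping in the two projection arguments. One must verify that a fair execution of the composed automaton really does project to a fair execution of each component (this is where (iv) is used twice), and that condition (iii) rigorously rules out the troublesome scenario in which residual activity of $P_1$ after $t_1$ perturbs $P_2$ by touching a shared variable and thereby prevents the hypothesis of (ii) from being applicable to $E_2$. Once these projection lemmas are pinned down, the chaining of (i) and (ii) is immediate, and the overall conclusion follows without further calculation.
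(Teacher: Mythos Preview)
The paper does not supply its own proof of this theorem: it is merely recalled from Dolev--Israeli--Moran~\cite{DoIM93} as a background result (``Also recall the fundamental Theorem~\ref{thm:comp}\ldots''), so there is no in-text argument to compare your proposal against.

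For what it is worth, your two-phase projection sketch is exactly the standard argument behind the result in~\cite{DoIM93}: project onto $P_1$, wait for $\psi_1$, then treat the suffix as an execution of $P_2$ in an environment where $\psi_1$ is a frozen precondition. The one delicate point is the one you already flag: the claim that steps of $P_2$ cannot falsify $\psi_1$ is not among conditions (i)--(iv) as stated here, and must instead be drawn from the very definition of \emph{fair composition} (in~\cite{DoIM93} the composition is asymmetric, with $P_2$ reading but not writing $P_1$'s variables). Once that is made explicit, your outline goes through.
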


\subsection{Correctness Proof}\label{correct}
Let $\psi$ be a predicate over the variables of protocol {\em UN},
and $\psi'$ a predicate over the variables of protocol {\em MDST}
(see Sect.~\ref{model}). Now, protocol {\em MDST} is the fair combination
of two subprotocols. The first protocol uses Merlin--Segall's APSP routing
algorithm (see Sect.~\ref{apsp}),
while the second subprotocol deterministically
computes the value $\gamma^*$ (see Sect.~\ref{center}). Hence, the local
predicates $LP_{uv}$ and $LP'_{uv}$ corresponding to the predicates
$\psi$ and $\psi'$, respectively, are defined by
\begin{eqnarray*}
LP_{uv} & \equiv & \left\{(\forall \; ID_i, ID_j \in IDList_u )\;\; i\not= j\;
\Longrightarrow \; ID_i \not= ID_j\right\}\\
& &\mbox{} \land  \;\; \left\{(\forall \;
ID_i, ID_j \in IDList_v) \;\; i\not = j\;
\Longrightarrow\; ID_i\not= ID_j\right\}\\
& & \mbox{} \land \;\; (u\ and\ v\ are\ both\ in\ phase\ 3) \; \;
\mbox{\rm for predicate}\ \; \; \psi\equiv (\forall uv \in E) \;\; LP_{uv},
\end{eqnarray*}
where the variable $IDList$ is defined in Sect.~\ref{un}.
And, similarly,
$$LP'_{uv} \equiv (d_u[v] < +\infty) \; \land \; (d_v[u] < +\infty),\; \;
\mbox{\rm for predicate}\ \; \; \psi' \equiv (\forall uv \in E) \; \; LP'_{uv}.$$

Note that this does not mean that the estimate values $d_u[v]$ are exact,
but that they are not too bad. Of course, if some distances $d_u[v]$ are
wrong, it may cause the construction of a MDST to fail. However, the
routing protocol is self-stabilizing, and after a while the estimate
distances shall be correct and a MDST will be found.

\begin{lemm}\label{lem:l}
Let ${\cal L}=\{LP_{uv}\}$ be the set of local predicates over the
variables of the randomized protocol UN. A network automaton $\cal N$
is locally checkable for $\psi$ using $\cal L$.
\end{lemm}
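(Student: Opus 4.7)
The plan is to verify each of the three conditions of Definition~\ref{def:local} in turn. Condition~(i) is immediate: $\psi$ is literally defined in the statement as $(\forall uv\in E)\;LP_{uv}$, so any state in which every $LP_{uv}\in {\cal L}$ holds lies in $\psi$ by unwrapping definitions. For~(ii), I would exhibit a concrete witness: any state of $\cal N$ reachable by UN in which every process holds a globally unique ID, sits in phase~3, and maintains an $IDList$ built from its current broadcast consisting solely of distinct entries. Such a state belongs to $S({\cal N})$ because it is precisely the kind of configuration produced with positive probability by the Monte-Carlo variant of the UN protocol inherited from~\cite{AnEH92} and described in~\cite{BuBL95}.

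The bulk of the work, and the main obstacle, is condition~(iii): every $LP_{uv}$ must be stable under arbitrary transitions of $\cal N$. I would handle this by case analysis on the action label of the transition $(s,a,s')$ starting from a state $s$ satisfying $LP_{uv}$. The easy cases are: (a) actions localized at processes other than $u$ and $v$, which do not modify any of the components of $LP_{uv}$ (namely $IDList_u$, $IDList_v$, and the phase-3 flags at $u$ and $v$) and hence preserve it trivially; (b) the initiation at $u$ (symmetrically at $v$) of a fresh broadcast, which resets $IDList_u$ to the singleton $\{ID_u\}$, trivially duplicate-free, while keeping $u$ in phase~3.

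The delicate case is the receipt at $u$ of a feedback message, which appends IDs collected downstream to $IDList_u$ and, at the end of a broadcast, may trigger a Reset that would drop $u$ out of phase~3 and break $LP_{uv}$. The hard part is to show that, under the standing hypothesis $LP_{uv}$, this Reset cannot fire in one step: using the phase-3 behaviour of UN recalled in Section~\ref{un}, I would argue by induction on the broadcast cycle that each append either preserves duplicate-freeness of $IDList_u$ (the broadcast is progressing normally) or would already have triggered a conflict in an earlier append, contradicting the hypothesis that $IDList_u$ was duplicate-free in $s$. The symmetric argument at $v$ together with case~(a) then yields stability of $LP_{uv}$, completing~(iii) and hence the lemma.
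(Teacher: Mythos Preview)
Your treatment of conditions~(i) and~(ii) is the same as the paper's: both are dispatched directly from the definition of $\psi$ and by exhibiting a state in which all processes sit in phase~3 with distinct IDs.

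For condition~(iii) there is a genuine gap. The predicate $LP_{uv}$ constrains only the local variables at $u$ and $v$: their current $IDList$s are duplicate-free and both processes are in phase~3. It says nothing about the IDs held by \emph{other} processes in the network. Take a state $s$ satisfying $LP_{uv}$ in which $u$ has just begun a fresh broadcast (so $IDList_u=\{ID_u\}$), while two remote processes $w\neq x$ happen to carry the same ID. As $u$'s broadcast collects feedback, a transition at $u$ will append the second occurrence of that shared ID and make $IDList_u$ contain a duplicate, falsifying $LP_{uv}$ in $s'$; the ensuing conflict detection then drops $u$ out of phase~3. Your inductive step ``would already have triggered a conflict in an earlier append'' does not apply: the \emph{first} append of the shared ID is innocuous, the \emph{second} is the one that creates the duplicate, and nothing in the hypothesis $LP_{uv}$ at $s$ precludes it. Hence one-step stability in the strict sense of Definition~\ref{def:local}(iii) cannot be obtained from your case analysis.

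The paper does not attempt a transition-by-transition invariance argument. For~(iii) it separates the failure-free case (where $u$ and $v$ remain in phase~3 by construction of \emph{UN}) from the case of node recoveries with arbitrary IDs: there $u$ and $v$ detect the conflict, initiate a Reset, and ``after a while'' return to phase~3 with unique IDs. This is an eventual-return argument rather than literal single-step stability, and the paper is content with that looser reading. Whatever one makes of that reading, it is the route the paper takes; your more rigorously structured attempt aims at the strict version of~(iii) but does not close it.
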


\begin{proof}
(By Definition \ref{def:local}).
Condition {\em (i)} clearly holds by the definition of $\psi$.
Condition {\em (ii)} holds for a state $s\in S({\cal N})$ such that
processes ID's are all distinct in phase~3.

Now suppose $s\in S({\cal N})$ satisfies $LP_{uv}$. In the case when
no failures occur, $u$ and $v$ obviously remain in phase~3 by construction
of protocol {\em UN}. In the case when nodes recoveries occur (with
arbitrary ID's), $u$ and $v$ are able to detect conflicts and if
necessary they initiate a Reset. After a while, each process (and
especially $u$ and $v$) returns to phase~3 with one unique ID.
Therefore, condition {\em (iii)} holds.
\end{proof}

\begin{lemm}\label{lem:l'}
Let ${\cal L'}=\{LP'_{uv}\}$ be the set of local predicates over the variables
of Merlin--Segall's APSP protocol.
A network automaton $\cal N$ is locally checkable for $\psi'$ using $\cal L'$.
\end{lemm}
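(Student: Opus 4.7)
The plan is to verify the three conditions of Definition~\ref{def:local} directly, exactly as was done in Lemma~\ref{lem:l}, exploiting the three structural properties of Merlin--Segall's APSP protocol recalled in Section~\ref{apsp} (routing tables cycle-free at all times, adaptation to topology changes, and self-stabilization).

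First, condition~\emph{(i)} is immediate from the definition $\psi' \equiv (\forall uv \in E)\, LP'_{uv}$: a state in which every local predicate holds is by definition a state in $\psi'$. No work is required here.

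Second, for condition~\emph{(ii)} I would exhibit an explicit state witnessing the set $\{LP'_{uv}\}$. Take any state $s \in S({\cal N})$ reached after Merlin--Segall's protocol has computed correct routing tables on the connected graph $G$ assumed in Section~\ref{algo}. Since $G$ is connected, for every edge $uv\in E$ and in particular for its two endpoints, the distances $d_u[v]$ and $d_v[u]$ are finite (bounded above by $\omega_{uv}$). Thus every $LP'_{uv}$ is satisfied by $s$.

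Third, and this is where the real content lies, I must show stability of each $LP'_{uv}$: if a transition $(s,a,s')$ of $\cal N$ occurs and $s\models LP'_{uv}$, then $s'\models LP'_{uv}$ as well. The key observation is that once $d_u[v]<+\infty$ holds, Merlin--Segall's updates can only \emph{decrease} (or preserve) the value of $d_u[v]$: updates consist in replacing the current estimate by $\omega_{uw}+d_w[v]$ received from a neighbour $w$ when this quantity is smaller, while neighbour-failure or topology-change events cause a re-computation that still yields a finite value because $u$ and $v$ are endpoints of the edge $uv\in E$, hence always connected by the direct link of finite weight $\omega_{uv}$. Consequently $d_u[v]$ stays in $[0,+\infty)$, and symmetrically for $d_v[u]$; condition~\emph{(iii)} holds.

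The main obstacle, and what the argument really rests on, is the last point: one must be careful that transient faults or link recoveries do \emph{not} re-inject the value $+\infty$ into the variables $d_u[v]$ and $d_v[u]$ once they have become finite. This follows from Properties~(a)--(c) of Merlin--Segall's protocol recalled in Section~\ref{apsp}, since the direct edge $uv\in E$ provides a permanent finite upper bound on both $d_u[v]$ and $d_v[u]$ as soon as they reflect its mere existence. With this in place, the three conditions of Definition~\ref{def:local} are established and the lemma follows.
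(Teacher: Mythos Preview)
Your proposal is correct and follows essentially the same approach as the paper: both proofs verify the three conditions of Definition~\ref{def:local} directly, with condition~\emph{(i)} immediate from the definition of $\psi'$, condition~\emph{(ii)} from connectivity of $G$, and condition~\emph{(iii)} from the behaviour of Merlin--Segall's protocol. The only difference is one of detail: for condition~\emph{(iii)} the paper simply invokes the convergence property~(c) of~\cite{MeSe79}, whereas you unpack the mechanism (your monotonicity claim for ordinary updates is a bit stronger than what Merlin--Segall actually guarantees, but your fallback argument via the direct edge $uv\in E$ is what really carries the point and is sound).
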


\begin{proof}
(By Definition \ref{def:local}).
Condition {\em (i)} clearly holds by the definition of $\psi'$.
Since $G$ is connected, there exists a path $[u,\ldots ,v]$ such that the
distance $d_G(u,v)$ is finite. Hence, condition {\em (ii)} holds
for the corresponding state $s\in S({\cal N})$.
Finally, condition {\em (iii)} clearly holds by convergence
of Merlin--Segall's routing protocol. (See \cite[Property~(c)]{MeSe79}, and
Sect.~\ref{apsp}).
\end{proof}
Recall that $\varphi_u^*.upbound$ denotes the best value of $s(\gamma^*)$
computed so far at node $u$. We show now that both protocols {\em MDST} and $\cal A$ stabilize to the
desired postcondition $\theta$ defined by:
$$\theta \equiv (\forall u \in V) \;\; \varphi_u^*.upbound=s(\gamma^*).$$
The local predicate $LP''_{uv}$ corresponding to $\theta$ is defined by:
$$LP''_{uv}\equiv \varphi_u^*.upbound =s(\gamma^*)
\;\land\;\varphi_v^*.upbound =s(\gamma^*).$$

\begin{lemm}\label{lem:mdst-stab}
Assume processes ID's are all distinct, the protocol MDST stabilizes to
$\theta$.
\end{lemm}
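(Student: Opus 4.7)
The plan is to derive this as a fair composition (Theorem~\ref{thm:comp}) of Merlin--Segall's APSP protocol and the deterministic $\gamma^*$-computation subprotocol that is built on top of it. Let $P_1$ denote Merlin--Segall's routing protocol and $P_2$ denote the deterministic subprotocol consisting of steps 2--6 of protocol MDST (the loop using \texttt{Gamma\_star}, the aggregation along $\Psi(r)$, and the final broadcast by $r$).

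First I would verify condition~(i) of Theorem~\ref{thm:comp}: that $P_1$ stabilizes to $\psi'$. By Lemma~\ref{lem:l'}, the network automaton is locally checkable for $\psi'$ using $\mathcal{L}'=\{LP'_{uv}\}$, so the main theorem of \cite{APVD94} on stabilization by local checking and global reset applies, and $P_1$ stabilizes to a state in which, in addition, every $d_u[v]$ equals $d_G(u,v)$ (this is the convergence property~(c) of \cite{MeSe79} recalled in Section~\ref{apsp}). In particular, after $\psi'$ holds, $r$ is well-defined as the unique node with minimum ID (IDs are assumed distinct) and $\Psi(r)$ is a correct SPT rooted at $r$.

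Next I would establish condition~(ii): assuming $\psi'$ holds and the tables $d_u[\cdot]$, $D$, $R$ are correct, $P_2$ stabilizes to $\theta$. For a fixed edge $e=uv$ with $ID_v>ID_u$, Procedure \texttt{Gamma\_star}$(e)$ enumerates, in the order given by the list $L_e$, the intersections of consecutive segments $f_i^-$ and $f_{i+1}^+$; by the constructive proof of Lemma~\ref{lem:hackimi}, the minimum $y$ returned equals $s(\gamma^*_e)$ and the corresponding $\alpha$ locates it on $e$. Hence after the while loop each node $u$ holds a candidate $\varphi$ whose $upbound$ equals $\min_{e\ni u, ID_v>ID_u} s(\gamma^*_e)$. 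The convergecast along $\Psi(r)$ (step~5) together with the minimum-finding at $r$ (step~6) then computes $\min_{e\in E} s(\gamma^*_e)=s(\gamma^*)$ by Lemma~\ref{lem:hackimi}(ii), and the final broadcast in step~6(b) installs this value in $\varphi_u^*.upbound$ at every node $u$, so $\theta$ holds. The local predicate $LP''_{uv}$ is clearly stable under subsequent rounds, because once all distances are correct every re-execution of $P_2$ recomputes the same value.

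For condition~(iii), once $\psi'$ holds the routing tables $d_u[\cdot]$ stop changing and these are the only variables of $P_1$ that $P_2$ reads; $P_2$ itself writes only into its own variables $\Lambda, \varphi, \varphi_u^*, \alpha, localmin$. Condition~(iv) (fair executions with respect to both protocols) is part of our model~$\mathcal{M}$. Applying Theorem~\ref{thm:comp} yields stabilization of the fair composition---i.e.\ of protocol MDST---to $\theta$, as required. The main obstacle is the correctness argument of step~(ii): one must check that the convergecast over $\Psi(r)$ is insensitive to transient inconsistencies in $\Lambda$ inherited from the initial arbitrary state, which follows because a fresh broadcast initiated by $r$ overwrites $\varphi_u^*$ at every node, and because every subsequent reinvocation of the protocol (cf.\ the remark after the formal description) recomputes $\varphi$ from the now-stable tables $d_u[\cdot]$.
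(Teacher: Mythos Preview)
Your proof takes a different route from the paper's. The paper's (sketch) proof argues directly that \emph{MDST} is locally checkable for $\theta$ using $\mathcal{L}''=\{LP''_{uv}\}$: it verifies the three conditions of Definition~\ref{def:local} and then invokes \cite[Theorem~10]{APVD94} (self-stabilization by local checking and global reset) to conclude. You instead decompose \emph{MDST} as a fair composition of $P_1$ (Merlin--Segall) and $P_2$ (the deterministic $\gamma^*$ computation) and verify the four hypotheses of Theorem~\ref{thm:comp}. Your decomposition is natural---the paper itself describes \emph{MDST} as exactly this fair composition in Section~\ref{intro} and Section~\ref{mdst}---and your argument is considerably more detailed than the paper's sketch, in particular the verification that \texttt{Gamma\_star} and the convergecast actually compute $s(\gamma^*)$.

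One point to tighten: in your treatment of conditions~(i) and~(iii) you slide between $\psi'$ (mere finiteness of the $d_u[v]$) and the stronger property ``all $d_u[v]=d_G(u,v)$''. Your claim under~(iii) that ``once $\psi'$ holds the routing tables $d_u[\cdot]$ stop changing'' is not literally true---$\psi'$ only asserts finiteness, and Merlin--Segall may still be updating finite but incorrect estimates. What you really need as $\psi_1$ in Theorem~\ref{thm:comp} is the stronger predicate ``the tables $d_u[\cdot]$ are correct'', to which $P_1$ does stabilize by property~(c) of~\cite{MeSe79}; with that choice of $\psi_1$ your conditions~(ii) and~(iii) go through as written. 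This is a presentational fix rather than a structural gap, and once made your argument is correct.
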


\begin{proof} (Sketch) First, protocol {\em MDST} is locally checkable for
$\theta$ using the set ${\cal L''}=\{LP''_{uv}\}$.
By Definition~\ref{def:local}, conditions {\em (i)} and {\em (ii)} clearly
hold.
Condition {\em (iii)} derives from the fact that Merlin--Segall's
protocol stabilizes to $\psi'$, while the computation of $\gamma^*$ is
deterministic. Consequently, protocol {\em MDST} is locally checkable
and stabilizes to $\theta$ by \cite[Theorem~10]{APVD94}.
\end{proof}

\begin{thm}\label{thm:algostab}
The randomized algorithm $\cal A$ stabilizes to $\theta$ with probability~1.
\end{thm}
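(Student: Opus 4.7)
The plan is to invoke the fair composition theorem (Theorem~\ref{thm:comp}) with $P_1 = $ protocol \emph{UN}, $\psi_1 = \psi$, and $P_2 = $ protocol \emph{MDST}, $\psi_2 = \theta$. Since algorithm $\cal A$ is by construction the fair composition of these two stages (see Sect.~\ref{un} and Sect.~\ref{mdst}), verifying the four hypotheses of Theorem~\ref{thm:comp} will yield that $\cal A$ stabilizes to $\theta$; and because $P_1$ is randomized whereas $P_2$ is deterministic, the stabilization to $\theta$ will be obtained with probability $1$.

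First I would establish hypothesis~(i): the \emph{UN} protocol stabilizes to $\psi$ with probability~$1$. By Lemma~\ref{lem:l}, the network automaton is locally checkable for $\psi$ using $\cal L = \{LP_{uv}\}$; combining this with the local-checking-and-global-reset theorem of~\cite[Theorem~10]{APVD94} and the randomized analysis of the variant of~\cite{AnEH92} recalled in Sect.~\ref{un}, every ID conflict is eventually detected and triggers a Reset, and phase~3 is reached with pairwise distinct ID's with probability~$1$ (cf.\ the analysis of Theorem~\ref{thm:unc} in~\cite{BuBL95}). Hypothesis~(ii) is exactly Lemma~\ref{lem:mdst-stab}: once $\psi$ holds (unique ID's on every pair of neighbours, all processes in phase~3), protocol \emph{MDST} stabilizes to $\theta$, which itself relies on the fair composition of Merlin--Segall's APSP routing protocol with the deterministic $\gamma^*$ computation through Lemmas~\ref{lem:l'} and~\ref{lem:mdst-stab}.

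For hypothesis~(iii), I would argue that once $\psi$ holds, \emph{UN} leaves every variable read by \emph{MDST} untouched: after stabilization of \emph{UN}, each node $u$ remains forever in phase~3 and only re-broadcasts its fixed $ID_u$ for the ID-conflict check; it neither modifies $ID_u$ nor triggers a new Reset (since no conflict is ever detected), and it does not touch the variables $d_u[\cdot]$, $\varphi$, $\varphi_u^*$, $\Lambda$, $D$, $R$ handled by \emph{MDST}. Hypothesis~(iv) follows from the definition of fair execution of IOA composition given in Sect.~\ref{model}: by construction, all executions of $\cal A$ are fair with respect to each of the composed protocols.

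The main obstacle is the probabilistic flavour of condition~(i), since Theorem~\ref{thm:comp} is stated in a deterministic framework. I would handle this by conditioning: let $\Omega_1$ be the (probability-$1$) event that \emph{UN} reaches a configuration satisfying $\psi$ in finite time; on $\Omega_1$, from that point on, hypothesis~(iii) makes \emph{UN} invariant on \emph{MDST}'s variables, so Theorem~\ref{thm:comp} applies pathwise to the remaining deterministic execution and \emph{MDST} stabilizes to $\theta$. Since $\Pr(\Omega_1)=1$, $\cal A$ stabilizes to $\theta$ with probability~$1$, which is the claimed statement.
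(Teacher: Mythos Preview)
Your proposal is correct and follows essentially the same route as the paper: apply Theorem~\ref{thm:comp} with $P_1=\textit{UN}$, $\psi_1=\psi$, $P_2=\textit{MDST}$, $\psi_2=\theta$, establishing~(i) via Lemma~\ref{lem:l} together with~\cite[Theorem~10]{APVD94}, (ii) via Lemma~\ref{lem:mdst-stab}, and~(iii) by construction. The only minor difference is in~(iv): the paper argues fairness from the fact that \emph{MDST} terminates (so only finitely many \emph{MDST} steps occur between successive \emph{UN} steps), whereas you invoke the IOA fairness definition directly; your explicit conditioning argument for the probabilistic aspect is more careful than the paper's own treatment.
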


\begin{proof} The following conditions hold.

{\em (i)} By Lemma~\ref{lem:l} and \cite[Theorem~10]{APVD94},
protocol {\em UN} stabilizes to $\psi$ with probability~1.

{\em (ii)} By Lemma~\ref{lem:mdst-stab},
protocol {\em MDST} stabilizes to $\theta$ if $\psi$ holds.

{\em (iii)} By construction,
protocol {\em UN} does not change variables used by {\em MDST}
once $\psi$ holds.

{\em (iv)} Since protocol {\em MDST} terminates,
there are only finitely many executions of {\em MDST} between two
executions of {\em UN}. The protocol {\em UN} stabilizes to
$\theta$ with probability~1
and since $\theta$ is true, each ID remains unchanged,
and so does the computation of $\gamma^*$. Therefore,
all executions are fair w.r.t. to both {\em UN} and {\em MDST}.

By Theorem~\ref{thm:comp}, algorithm $\cal A$ which is the
fair composition of {\em UN} and {\em MDST} stabilizes to $\theta$
with probability~1.
\end{proof}

\section{Analysis}\label{anal}

\subsection{Protocol $UN$}
The three phases executed in protocol {\em UN} are described in
\cite{AnEH92,BuBL95}. (See Sect.~\ref{un}).

\begin{lemm}\label{lem:rounds}
Each Reset lasts at most $2D + n$ rounds. If two processes have the
same ID, then within at most $O(n)$ rounds some process in the network
will order a Reset. After a Reset, it takes the system $4n$ rounds
either to perform global memory adaptation, or to complete another
Reset.
\end{lemm}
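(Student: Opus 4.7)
The plan is to establish the three clauses separately by exploiting (a) the PIF (Propagation of Information with Feedback) structure of the Reset primitive, and (b) the three–phase design of protocol \emph{UN} described in Section~\ref{un}, in particular the behaviour of phase~3.

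\textbf{Clause 1 (Reset duration $\le 2D+n$).} A Reset is executed as a broadcast of a reset message from the initiating node, followed by a feedback wave returning along the induced tree. Under the timing convention of model~$\mathcal M$ (every continuously enabled action fires within one time unit, and a message in a unit–capacity link is delivered within one time unit), the forward wave reaches every node within $D$ rounds, since each hop costs at most one unit and the network has hop–diameter $D$. Symmetrically, the feedback wave climbs back to the initiator along the same tree in $D$ further rounds. The additional $+n$ term must be charged to the FIFO/unit–capacity channels: at most $n-1$ messages can be queued ahead of the Reset message on the worst path, so a message may wait for up to $n$ predecessors to drain before being delivered. Summing gives the $2D+n$ bound.

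\textbf{Clause 2 (conflict detection in $O(n)$ rounds).} In phase~3, each process $u$ repeatedly broadcasts $\langle ID_u\rangle$ and collects the variable $IDList$ via feedback over its PIF tree. A single PIF in phase~3 completes within $O(n)$ rounds, since the PIF tree has at most $n-1$ edges and, once phase~3 is reached, the broadcast/feedback is issued over a valid tree. Suppose two processes $u$ and $v$ share the same ID. Then within at most one such PIF cycle, the broadcast initiated by $u$ (or $v$) returns an $IDList$ containing its own ID at a process that did not originate it, or the collected list contains two copies of the same ID; by construction of protocol \emph{UN}, this triggers a Reset. Hence a Reset is ordered within $O(n)$ rounds of the duplicate arising.

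\textbf{Clause 3 ($4n$ rounds until memory adaptation or another Reset).} A Reset restarts every process in phase~1 of \emph{UN}. The three phases (size estimation, ID selection/conflict checking, broadcast of phase~3) each consist of a constant number of PIF sweeps over the current tree, and the estimate of $n$ used is bounded by $\lceil\lg n\rceil$ as noted in Section~\ref{un}. Each such sweep costs $O(n)$ rounds by the argument used in Clause~2, and summing the phase costs gives the stated $4n$ round bound before either the system performs the global memory adaptation provided in \cite{AnEH92} (stabilising a unique naming) or some node detects a conflict and orders a new Reset.

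\textbf{Main obstacle.} The hardest part is pinning down the additive $+n$ in Clause~1: one has to argue that FIFO queueing of Reset–related messages on unit–capacity links cannot stall the propagation by more than $n$ rounds, even when multiple nodes order Resets concurrently. Clauses~2 and~3 are then routine PIF round–counting once this queueing bound is in hand.
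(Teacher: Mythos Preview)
The paper does not actually supply a proof of this lemma: Section~\ref{anal} merely states it (with the accompanying remark that phases~1 and~2 take exactly $4n$ rounds) and refers the reader to \cite{AnEH92,BuBL95} for the description and analysis of the three phases of \emph{UN}. So there is no in-paper argument to compare against; what can be assessed is whether your plan would reconstruct a valid proof.

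There is a genuine gap in Clause~1. You attribute the additive $+n$ to FIFO queueing, writing that ``at most $n-1$ messages can be queued ahead of the Reset message on the worst path.'' This is inconsistent with the model~$\mathcal M$ as set up in Section~\ref{model}: links are explicitly \emph{unit-capacity}, i.e.\ each $Q_{uv}$ holds at most one message at any instant, and a stored message is delivered within one time unit. There is therefore no mechanism by which $n-1$ messages can sit in front of a Reset message on a link or a path, and your queueing argument cannot account for the $+n$ term. The correct source of the linear term must come from the internal structure of the Reset primitive in \cite{AnEH92} (e.g.\ a per-node phase in addition to the two $D$-deep waves), not from channel contention; as written, Clause~1 does not establish the bound.

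A smaller inaccuracy appears in Clause~3. The paper's own remark immediately after the lemma pins the $4n$ specifically to the completion of \emph{phases~1 and~2} (the memory-adaptation phases), not to ``three phases \ldots\ each consist[ing] of a constant number of PIF sweeps.'' Your accounting should be aligned with that decomposition; otherwise the constant $4$ is unmotivated. Clause~2 is essentially fine as a plan, though note that a PIF over a tree costs $O(\text{depth})\le O(n)$ rounds, not $O(\text{number of edges})$.
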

Note that the maximum number of rounds needed for the completion of phases
1 and 2 is exactly $4n$.
\begin{lemm}\label{lem:ID's}
If $n$ processes choose random ID's from the set $[N] = \{1, \ldots ,N\}$,
where $N \geq n^{2}/{\epsilon}$, all ID's will be unique with probability
$p >1 - \epsilon$, for all $0 < \epsilon < 1$.
\end{lemm}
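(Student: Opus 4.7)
The plan is to use a union bound over pairs of processes, which is the standard birthday-problem argument.

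First I would introduce, for each pair of distinct processes $i \neq j$, the event $E_{ij}$ that processes $i$ and $j$ select the same ID from $[N]$. Since the IDs are chosen uniformly and independently from a set of size $N$, a direct calculation gives $\Pr[E_{ij}] = 1/N$ for each pair, regardless of the joint distribution of the other processes' choices.

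Next I would apply Boole's inequality to the union of all $\binom{n}{2}$ pair-collision events:
\[
\Pr\!\left[\bigcup_{i < j} E_{ij}\right] \;\leq\; \sum_{i<j}\Pr[E_{ij}] \;=\; \binom{n}{2}\cdot\frac{1}{N} \;\leq\; \frac{n^{2}}{2N}.
\]
The event that all $n$ IDs are distinct is exactly the complement of $\bigcup_{i<j} E_{ij}$, so
\[
p \;=\; 1 - \Pr\!\left[\bigcup_{i<j} E_{ij}\right] \;\geq\; 1 - \frac{n^{2}}{2N}.
\]

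Finally, I would substitute the hypothesis $N \geq n^{2}/\epsilon$ to obtain $n^{2}/(2N) \leq \epsilon/2 < \epsilon$, which yields $p > 1 - \epsilon$ as required. There is no real obstacle here: the only point worth a moment's care is justifying that $\Pr[E_{ij}] = 1/N$ when the other $n-2$ processes' choices are marginalized out, but this is immediate from the independence and uniformity of the draws. The argument is clean enough that no sharper estimate (such as the product $\prod_{i=0}^{n-1}(1 - i/N)$ together with $1-x \geq e^{-x/(1-x)}$) is needed, since the union bound already beats the target $1-\epsilon$ by a factor of two.
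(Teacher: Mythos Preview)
Your proof is correct, but it takes a genuinely different route from the paper. The paper computes the exact non-collision probability
\[
p \;=\; \prod_{i=1}^{n-1}\Bigl(1 - \frac{i}{N}\Bigr),
\]
then lower-bounds each factor via $1 - x > e^{-2x}$ (using the side condition $N \ge 2n$, which follows from $N \ge n^{2}/\epsilon$ for $n \ge 2$) to get $p > e^{-n^{2}/N}$, and finally invokes $e^{-\epsilon} > 1 - \epsilon$. Your argument bypasses the product form entirely and goes straight through the union bound on pair collisions, yielding $p \ge 1 - \binom{n}{2}/N \ge 1 - n^{2}/(2N)$. This is shorter, needs no auxiliary assumption like $N \ge 2n$, and in fact produces a bound that is a factor of two sharper than the target. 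The paper's approach has the minor advantage of starting from the exact expression for $p$, which could in principle be pushed further, but for the stated lemma your union-bound argument is the cleaner proof.
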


\begin{proof}
The probability $p$ that all processes randomly choose distinct ID's is
$$p \: = \: \frac{N (N - 1) \cdots (N - n + 1)}{N^{n}}
\: = \: \prod_{i=1}^{n-1} (1 - i/N).$$
Assuming that $n/N \leq 1/2$, or $N \geq 2n$ yields
$$p \: > \: \prod_{i=1}^{n-1} e^{-2i/N} \: >\: e^{-n^{2}/N}.$$
Since $(\forall \: 0 < \epsilon < 1) \; e^{-\epsilon} > 1-\epsilon$, we
have that $p > 1 - \epsilon$ when $N \geq n^{2}/{\epsilon}$. Hence,
it is sufficient to randomly select the $n$ identities from the set $[N]$,
with $N \geq n^{2}/{\epsilon}$, in which case the identities are all
distinct with probability $> 1 - \epsilon$, for fixed $0 < \epsilon < 1$.
\end{proof}

\begin{lemm}\label{lem:reset}
If after a Reset there exist $n' < n$ distinct ID's in the network, then a
Reset is initiated by the end of phase 2 with probability $\geq 1- 2^{n'-n}$.
\end{lemm}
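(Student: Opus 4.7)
The plan is to argue by a pigeonhole‐plus‐independence bound. If only $n' < n$ distinct IDs appear in the network after a Reset, then by the pigeonhole principle at least $n - n'$ processes share their ID with some other process. Call these the \emph{conflicting} processes. I would show that for each conflicting process, an independent random choice made during phase~2 has probability at least $1/2$ of producing a situation in which the conflict is detected (triggering a Reset); then the probability that \emph{none} of the $n - n'$ conflicting processes initiates a Reset is at most $(1/2)^{n - n'} = 2^{n' - n}$, whence a Reset is initiated with probability at least $1 - 2^{n' - n}$.

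First I would recall, from the description of phase~2 in \cite{AnEH92,BuBL95}, the precise random experiment that each process performs (the guess involved in the memory-adaptive size estimate) together with the deterministic checking step: if two processes share an ID but any observable state differs consistently with what the checking step examines, then a conflict is flagged and a Reset is ordered. I would then verify that among the two possible outcomes of a conflicting process's random draw, at most one is the ``bad'' outcome that evades detection; this is the key $1/2$ factor.

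Next I would exploit the fact that the $n - n'$ conflicting processes draw their random bits independently of one another. Independence, rather than a mere union bound, lets me multiply the per-process failure probabilities: the joint probability that \emph{all} conflicting processes draw the ``bad'' value is at most $\prod_{i=1}^{n-n'} \tfrac{1}{2} = 2^{-(n-n')}$. Taking complements gives the bound stated in the lemma. Combining with Lemma~\ref{lem:rounds}, which bounds the duration of phase~2, guarantees that the Reset is indeed initiated \emph{by the end of phase~2}, not merely eventually.

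The main obstacle I expect is the first step: pinning down exactly which random choice in phase~2 supplies the $1/2$ factor, and confirming that the relevant choices of distinct conflicting processes are genuinely mutually independent (so the product bound, and not just a union bound of the weaker form $(n-n')/2$, applies). This is essentially a bookkeeping matter once the phase~2 description from \cite{AnEH92,BuBL95} is on the table, but it is the only point where the specific mechanics of the memory‑adaptive {\em UN} protocol enter the argument, and getting the independence right is what makes the exponential bound $2^{n'-n}$ (as opposed to a polynomial one) tight.
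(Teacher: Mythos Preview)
The paper does not actually supply a proof of Lemma~\ref{lem:reset}: it is stated bare in Section~\ref{anal}, sandwiched between Lemma~\ref{lem:ID's} (which is proved) and Theorem~\ref{thm:unc} (which is not), and the details of the three-phase \emph{UN} protocol together with its analysis are deferred to the references~\cite{AnEH92,BuBL95}. There is therefore nothing in the paper itself to compare your proposal against.

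On its own merits, your outline is structurally the natural route to an exponential bound of the shape $1 - 2^{n'-n}$: locate $n - n'$ independent Bernoulli-$\tfrac12$ trials, one per ``excess'' process, and multiply the failure probabilities. The pigeonhole step is fine (with $n'$ distinct IDs among $n$ processes there are at least $n - n' + 1 > n - n'$ processes whose ID collides with another's, which only helps). But, as you yourself flag, the entire substance of the argument lies in the step you have not carried out: identifying which random bit in phase~2 supplies the factor $\tfrac12$, and verifying that these bits are mutually independent across the conflicting processes. Without the phase-2 mechanics from~\cite{AnEH92,BuBL95} spelled out, what you have is a plan rather than a proof --- and since the present paper omits the proof entirely, it offers no help in closing that gap.
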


\sloppypar
\begin{thm}\label{thm:unc}
Let $\Delta$ be the maximum degree of the network. Starting from any state,
the probability that the system will stabilize in $O\left(n (1 + \log\log n
- \log\log(\Delta+1)) \right)$ rounds is $\ge 1 - \delta$, for some constant
$0<\delta <1$ which does not depend on the network. The expected number
of rounds until protocol UN stabilizes is $O(n \log\log n)$. The maximal
memory size used by each process in any execution of protocol UN is at most
$O(n^2 \log n)$ bits.
\end{thm}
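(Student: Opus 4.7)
The plan is to decompose an execution of protocol \emph{UN} into \emph{epochs}, where an epoch is a maximal interval between two consecutive Resets (or from the start of the execution to the first Reset), and then to show that only $O(\log\log n - \log\log(\Delta+1))$ epochs occur before stabilization with high probability. First I would use Lemma~\ref{lem:rounds} to bound the duration of each epoch by $O(n)$ rounds: a Reset costs at most $2D+n$ rounds, and the ensuing phases~1 and~2 together with one phase~3 broadcast contribute an additional $O(n)$ rounds. Thus the stabilization time is $O(n)$ times the number of epochs, and the whole argument reduces to controlling this count.

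Next I would track the growth of the local estimate $\hat n$ of the network size, which each process may initialize from purely local information (for instance, $\hat n_0=\Delta+1$, since it is at least the observed local degree) and increases after every Reset. Following the scheme of~\cite{AnEH92,BuBL95}, the update rule iterates $\log_2\hat n$ so that after $k$ Resets one has $\log_2\hat n_k = 2^k\log_2\hat n_0$, giving $\hat n_k\ge n$ after $k^{*}=\lceil\log_2\log_2 n-\log_2\log_2(\Delta+1)\rceil$ epochs. Multiplying by the $O(n)$ per-epoch cost yields the claimed $O(n(1+\log\log n-\log\log(\Delta+1)))$ bound on the number of rounds needed to reach an estimate that is large enough.

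I would then convert this into probabilistic statements via Lemmas~\ref{lem:ID's} and~\ref{lem:reset}. As long as $\hat n_k<n$, there are strictly fewer than $n$ distinct IDs in the network, so by Lemma~\ref{lem:reset} a new Reset is triggered by the end of phase~2 with probability $\ge 1-2^{n'-n}$, i.e.\ the estimate does climb with overwhelming probability. Once $\hat n\ge n^{2}/\epsilon$ (which is reached after at most one extra doubling beyond $k^{*}$), Lemma~\ref{lem:ID's} guarantees that the random IDs are all distinct with probability $\ge 1-\epsilon$, so the protocol settles in phase~3 without another Reset. A union bound over the $O(\log\log n)$ relevant epochs yields a failure probability bounded by a constant $\delta<1$, giving the high-probability bound. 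Because each failure restarts a geometric-type process with constant success probability, the expected number of epochs is still $O(\log\log n)$, whence the expected stabilization time is $O(n\log\log n)$.

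For the memory bound I would simply enumerate the per-process state: the current estimate $\hat n$ and a small phase counter; the PIF bookkeeping (parent, children, feedback flags) for each concurrent broadcast; and, most importantly, an \texttt{IDList} per active broadcast of length at most $n$ over an alphabet of $O(\log\hat n)=O(\log n)$ bits. Since a process may be participating in up to $n$ simultaneous broadcasts (one per originator), the dominant term is $n\cdot n\cdot O(\log n)=O(n^2\log n)$ bits. The step I expect to be the main obstacle is the second one: justifying the doubly-exponential growth rule for $\hat n$ (so that $\log\log$ rather than $\log$ appears) while simultaneously showing that each epoch still finishes in $O(n)$ rounds. This must be driven by the precise update rule of protocol \emph{UN} as laid out in~\cite{AnEH92,BuBL95}, and is what ultimately separates the $O(n\log\log n)$ bound claimed here from the naive $O(n\log n)$.
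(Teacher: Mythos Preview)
The paper is an extended abstract and does \emph{not} actually prove Theorem~\ref{thm:unc}: the theorem is stated without argument, and the proof is explicitly deferred to the technical report~\cite{BuBL95} (see the remark in Sect.~\ref{un}: ``see the proof of Theorem~\ref{thm:unc} in~\cite{BuBL95}''). So there is no in-paper proof to compare against; the only internal evidence is the scaffolding of Lemmas~\ref{lem:rounds}--\ref{lem:reset}, which were clearly stated precisely so that this theorem could be assembled from them.

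Relative to that scaffolding, your reconstruction is on target. You use Lemma~\ref{lem:rounds} exactly as intended to bound an epoch by $O(n)$ rounds; you invoke Lemma~\ref{lem:reset} to drive the estimate upward whenever fewer than $n$ distinct ID's exist; and you apply Lemma~\ref{lem:ID's} once the estimate exceeds $n^2/\epsilon$ to conclude stabilization with constant probability. The doubly-exponential growth rule you posit (squaring $\hat n$, i.e.\ doubling $\log_2\hat n$) is consistent with the $\log\log$ dependence in the statement, and initializing $\hat n_0=\Delta+1$ is exactly what produces the $-\log\log(\Delta+1)$ correction term. Your memory accounting ($n$ concurrent broadcasts, each carrying an $IDList$ of up to $n$ entries of $O(\log n)$ bits) also matches the $O(n^2\log n)$ claim.

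One point worth tightening: Lemma~\ref{lem:reset} only gives probability $\ge 1-2^{n'-n}$ of triggering a Reset, which is merely $\ge 1/2$ when $n'=n-1$. So the progression through epochs is not ``overwhelming'' at every step; rather, each epoch succeeds with at least constant probability, and the $O(\log\log n)$ bound on the \emph{expected} number of epochs follows from a geometric argument, while the high-probability bound on reaching $\hat n\ge n$ in $k^*$ epochs needs a slightly more careful union bound (or the observation that failure to Reset when $n'<n$ still leaves a detectable conflict for the next phase-3 broadcast, per Lemma~\ref{lem:rounds}). This is the one place where your sketch is looser than it should be, but it does not change the outcome.
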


\subsection{Protocol $MDST$}\label{mdstcomp}

\begin{lemm}\label{lem:mdstc}
The time complexity of protocol MDST is at most
$O(n\Delta+{\cal D}^2)$, and its space complexity is
$O(n\log n + n\log W)$ bits, where $W$ is the largest edge weight.
\end{lemm}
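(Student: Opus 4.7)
The plan is to bound each of the two complexity measures by analyzing the two subprotocols that make up \emph{MDST} separately and then taking the maximum (for time) or the sum (for space).

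For the time bound, I would split the execution into (a) the Merlin--Segall APSP subprotocol that computes the $d_u[v]$'s, $D$ and $R$, and (b) the deterministic absolute-center subprotocol that runs \texttt{Gamma\_star} edge-by-edge and then performs the convergecast/broadcast in the tree $\Psi(r)$. For (a), I would cite the known stabilization time of Merlin--Segall's routing algorithm, which runs one single-destination instance per target with pipelined update cycles of length $\mathcal{D}$ and needs $\mathcal{D}$ such cycles to converge; because the $n$ destination-instances pipeline on FIFO links, this gives $O(\mathcal{D}^2)$ rounds rather than $O(n\mathcal{D})$. For (b), at each node $u$ the loop of step~3 is executed once for each incident edge $uv$ with $ID_v>ID_u$, i.e.\ at most $\Delta$ times; each call to \texttt{Gamma\_star}$(uv)$ costs $O(|L_e|)=O(n)$ elementary operations after the dominated pairs in $\mathcal{C}_e$ have been discarded, yielding $O(n\Delta)$ local computation per node. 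The minimum-finding phase (steps 4--6) is a standard convergecast followed by a broadcast on $\Psi(r)$, so it costs $O(\mathcal{D})$ rounds, which is dominated by the $\mathcal{D}^2$ term. Summing gives $O(n\Delta+\mathcal{D}^2)$.

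For the space bound, I would enumerate the persistent variables held at each process. The distance array $d_u[\,\cdot\,]$ has $n$ entries, each encoding a distance of magnitude at most $(n-1)W$, hence $O(\log n+\log W)$ bits, for a total of $O(n\log n+n\log W)$ bits. Merlin--Segall's routing table contributes one parent/port identifier per destination, which is $O(n\log n)$ bits, already covered. The auxiliary records $\varphi,\varphi_u^*\in\Lambda$ carry a constant number of reals and IDs of width $O(\log n+\log W)$ and are therefore negligible. The only potentially dangerous object is the sorted list $L_e$, of length at most $n$; since the \textbf{While} loop of step~3 handles one edge at a time, $L_e$ need not be stored simultaneously for all incident edges, and a single $L_e$ can be built on the fly from the already-stored $d_u[\,\cdot\,]$, so it is absorbed in the $O(n\log n+n\log W)$ bound.

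The delicate point in this plan is the justification of the $O(\mathcal{D}^2)$ figure for the APSP stage: one must verify that running $n$ independent single-destination instances of Merlin--Segall in parallel still converges in $\mathcal{D}$ update cycles of length $\mathcal{D}$ when links are FIFO and unit-capacity, and, more importantly, that after the local-checking/global-reset transformation of \cite[Theorem~10]{APVD94} the asymptotic round count is preserved (the overhead there is linear in the stabilization time of the underlying non-stabilizing protocol). Once this is granted, the remaining calculations are routine, and summing the round counts of the two subprotocols with Theorem~\ref{thm:comp} delivers the stated $O(n\Delta+\mathcal{D}^2)$ time and $O(n\log n+n\log W)$ space bounds.
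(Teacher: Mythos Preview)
Your proposal is correct and follows the same decomposition as the paper's own proof: $O(\mathcal{D}^2)$ rounds from Merlin--Segall (via $\mathcal{D}$ update cycles each of cost $O(\mathcal{D})$), $O(n\Delta)$ from the absolute-center phase, and $O(n\log n + n\log W)$ space attributed to the routing tables. Your treatment is in fact more detailed than the paper's---in particular your justification of the $O(n\Delta)$ term (via $\Delta$ calls to \texttt{Gamma\_star} at $O(n)$ each) and your itemized space accounting spell out steps that the paper's proof leaves implicit or states tersely.
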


\begin{proof}
It is shown in~\cite{MeSe79} that after $i$ update
rounds, all shortest paths of at most $i$ hops have been correctly
computed, so that after at most $\cal D$ rounds, all shortest paths to
node $u$ are computed.  Shortest paths to each destination are
computed by executing the protocol independently for each destination.
Since a round costs $O({\cal D})$ time, the stabilization time of
Merlin--Segall's protocol is $O({\cal D}^2)$.  Now, the computation of
$\gamma^*$ requires a minimum finding over a tree (viz., $O(n)$) and
local computations on each adjacent edge of $G$ (viz., $O(\Delta)$,
where $\Delta$ is the maximum degree). Hence, the stabilization time
of the protocol {\em MDST} is $O(n\Delta + {\cal D}^2)$.

Finally, $O(n\log n + n\log W)$ space complexity is needed to maintain
global routing tables.
\end{proof}
Note that since ${\cal D} \leq D \leq W {\cal D}$, the ``hop time complexity''
used above is more accurate.

\subsection{Complexity Measures of Algorithm $\cal A$}

The following theorem summarizes our main result, and its proof follows from
the previous Lemma.

\begin{thm}\label{thm:complex}
Starting from any state, the probability that algorithm $\cal A$ will
stabilize is $\geq 1-\delta$, for some constant $0<\delta <1$ which
does not depend on the network. Recall ${\cal D}$ be the diameter
of~$G$ in terms of hops, $\Delta$ the maximum degree, and $W$ the
largest edge weight. The expected time complexity of $\cal A$ is
$O(n\Delta + {\cal D}^2 + n \log\log n)$, and its space complexity is
at most $O(n^2\log n + n\log W)$ bits.
\end{thm}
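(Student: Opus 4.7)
The plan is to treat Theorem \ref{thm:complex} as essentially an aggregation statement: algorithm $\cal A$ is the fair composition of the two stabilizing protocols \emph{UN} and \emph{MDST} already analysed above, and the paper has established (at the end of Section \ref{intro}) that for a composed algorithm the time, space, and communication complexities are simply the \emph{sum} of those of the combined protocols. So the proof reduces to invoking the correct prior results and adding up the bounds.

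First I would establish the probability statement. By Theorem \ref{thm:algostab}, $\cal A$ stabilizes to $\theta$ with probability $1$; in particular, it stabilizes with probability $\geq 1-\delta$ for any constant $0<\delta<1$, independent of the network. A cleaner way to obtain the quantitative $1-\delta$ bound is to appeal directly to Theorem \ref{thm:unc}, which already establishes that after $O(n(1+\log\log n - \log\log(\Delta+1)))$ rounds protocol \emph{UN} has stabilized with probability $\geq 1-\delta$; once \emph{UN} has stabilized, Lemma \ref{lem:mdst-stab} guarantees deterministic stabilization of \emph{MDST}, so the overall stabilization probability inherits the same bound.

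Next I would add up the time bounds. By Theorem \ref{thm:unc}, the expected number of rounds until \emph{UN} stabilizes is $O(n\log\log n)$. By Lemma \ref{lem:mdstc}, once \emph{UN} has stabilized (so that ID's are unique, which is the hypothesis of Lemma \ref{lem:mdst-stab}), \emph{MDST} stabilizes in $O(n\Delta+{\cal D}^2)$ time. Using the composition rule, the expected stabilization time of $\cal A$ is therefore $O(n\log\log n)+O(n\Delta+{\cal D}^2) = O(n\Delta+{\cal D}^2+n\log\log n)$, as claimed.

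Finally, for the space bound I would sum the per-protocol space costs: $O(n^2\log n)$ bits for \emph{UN} (Theorem \ref{thm:unc}) and $O(n\log n + n\log W)$ bits for \emph{MDST} (Lemma \ref{lem:mdstc}), giving $O(n^2\log n + n\log W)$ bits in total, since $n\log n$ is dominated by $n^2\log n$. The main (minor) subtlety worth flagging is condition (iv) of Theorem \ref{thm:comp}: one must check that the time/space accounting is consistent with the fair composition, i.e.\ that although \emph{MDST} is re-executed each time Merlin--Segall's routing tables change, after \emph{UN} stabilizes the ID's stop changing and so the work inside each re-execution of \emph{MDST} is bounded by Lemma \ref{lem:mdstc}; this was already argued in the proof of Theorem \ref{thm:algostab} and so requires only a brief reference here.
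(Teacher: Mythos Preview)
Your proposal is correct and matches the paper's approach: the paper states only that the theorem ``summarizes our main result, and its proof follows from the previous Lemma,'' i.e., it too treats Theorem~\ref{thm:complex} as an aggregation of the \emph{UN} bounds (Theorem~\ref{thm:unc}) and the \emph{MDST} bounds (Lemma~\ref{lem:mdstc}) via the composition rule. Your write-up is in fact more explicit than the paper's one-line justification, but the content and route are the same.
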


Since the number of messages required in Merlin--Segall's protocol is
at most $O(n^2m)$, the communication complexity of $\cal A$ is
$O(n^2mK)$ bits (where $K=O(\log n +\log W)$ bits is the largest
message size).

\section{Concluding Remarks}\label{concl}
We proposed a uniform self-stabilizing algorithm for distributively
finding a MDST of a positively weighted graph. Our algorithm is new.
It works for arbitrary anonymous networks topologies, symmetry is
broken by randomization; it stabilizes in $O(n\Delta + {\cal D}^2 + n
\log\log n)$ expected time, and requires at most $O(n^2\log n + n\log
W)$ memory bits. The assumptions of our model $\cal M$ are quite
general, and in some sense, the algorithm might be considered
reasonably efficient in such a setting (even though the communication
complexity appears to be the weak point of such algorithms).
Whatsoever, the stabilization complexities can be improved in terms
of time and space efficiency by restricting the model's assumptions
and using the very recent results proposed in \cite{Dole94} and
\cite{AKMP+93}. First, the randomized uniform self-stabilizing
protocol presented in \cite{Dole94} provides each (anonymous) process
of a uniform system with a distinct identity. This protocol for unique
naming uses a predefined fixed amount of memory and stabilizes within
$\Theta(D)$ expected time (where $D$ is the diameter of the network).
Secondly, following \cite{AKMP+93}, we may restrict our model and assume
that a pre-specified bound $B(D)$ on the diameter $D$ is known.
In $O(D)$ time units, the stabilizing protocol in \cite{AKMP+93} produces
a shortest paths tree rooted at the minimal ID node of the network;
in addition, the complexity of the space requirement and messages size
is $O(\log B(D))$. In this restricted model (i.e., assuming the knowledge
of an upper bound on $D$), the fair composition of the two protocols yields
a randomized uniform self-stabilizing algorithm which finds a MDST with
stabilization time (at most) $O(n)$ and space complexity $O(\log B(D))$. 
In this setting, the fact that the space complexity does {\em not}
depend on $n$ makes the solution more adequate for dynamic networks.

\end{document}